\newcommand*\samethanks[1][\value{footnote}]{\footnotemark[#1]}
\g@addto@macro\bfseries{\boldmath}
\g@addto@macro\mdseries{\unboldmath}
\g@addto@macro\normalfont{\unboldmath}
\g@addto@macro\rmfamily{\unboldmath}
\g@addto@macro\upshape{\unboldmath}
\renewcommand*{\multicitedelim}{\addcomma\space}
    \newlength{\temp@x}%
    \newlength{\temp@y}%
    \newlength{\temp@w}%
    \newlength{\temp@h}%
    \def\my@coords#1#2#3#4{%
      \setlength{\temp@x}{#1}%
      \setlength{\temp@y}{#2}%
      \setlength{\temp@w}{#3}%
      \setlength{\temp@h}{#4}%
      \adjustlengths{}%
      \my@pdfliteral{\strip@pt\temp@x\space\strip@pt\temp@y\space\strip@pt\temp@w\space\strip@pt\temp@h\space re}}%
      \def\my@pdfliteral#1{\pdfliteral page{#1}}
      \def\adjustlengths{}%
      \def\my@pdfliteral #1{}
      \def\adjustlengths{\setlength{\temp@h}{-\temp@h}\addtolength{\temp@y}{1in}\addtolength{\temp@x}{-1in}}%
    \def\Hy@colorlink#1{%
      \begingroup
        \ifHy@ocgcolorlinks
          \def\Hy@ocgcolor{#1}%
          \my@pdfliteral{q}%
          \my@pdfliteral{7 Tr}
        \else
          \HyColor@UseColor#1%
        \fi
    }%
    \def\Hy@endcolorlink{%
      \ifHy@ocgcolorlinks%
        \my@pdfliteral{/OC/OCPrint BDC}%
        \my@coords{0pt}{0pt}{\pdfpagewidth}{\pdfpageheight}%
        \my@pdfliteral{F}
        %
        \my@pdfliteral{EMC/OC/OCView BDC}%
        \begingroup%
          \expandafter\HyColor@UseColor\Hy@ocgcolor%
          \my@coords{0pt}{0pt}{\pdfpagewidth}{\pdfpageheight}%
          \my@pdfliteral{F}
        \endgroup%
        \my@pdfliteral{EMC}%
        \my@pdfliteral{0 Tr}
        \my@pdfliteral{Q}%
      \fi
      \endgroup
    }%
\colorlet{DarkRed}{red!50!black}
\colorlet{DarkGreen}{green!50!black}
\colorlet{DarkBlue}{blue!50!black}
\declaretheorem[numberwithin=section]{theorem}
\declaretheorem[numberlike=theorem]{lemma}
\declaretheorem[numberlike=theorem]{corollary}
\declaretheorem[numberlike=theorem]{definition}
\newcommand{\dist}{\operatorname{dist}}
\title{Bootstrapping Dynamic Distance Oracles}
\author{Sebastian Forster \thanks{Department of Computer Science, University of Salzburg, Salzburg, Austria. This work is supported by the Austrian Science Fund (FWF): P 32863-N. This project has received funding from the European Research Council (ERC) under the European Union's Horizon 2020 research and innovation programme (grant agreement No~947702).} \and Gramoz Goranci \thanks{Department of Computer Science, University of Vienna, Vienna, Austria} \and Yasamin Nazari\samethanks[1] \and Antonis Skarlatos\samethanks[1]}
\date{}
\begin{document}
\maketitle
\begin{abstract}
Designing approximate all-pairs distance oracles in the fully dynamic setting is one of the central problems in dynamic graph algorithms. Despite extensive research on this topic, the first result breaking the $O(\sqrt{n})$ barrier on the update time for any non-trivial approximation was introduced only recently by Forster, Goranci and Henzinger~[SODA'21] who achieved $m^{1/\rho+o(1)}$ amortized update time with a $O(\log n)^{3\rho-2}$ factor in the approximation ratio, for any parameter $\rho \geq 1$.

In this paper, we give the first \emph{constant-stretch} fully dynamic distance oracle with a small polynomial update and query time. Prior work required either at least a poly-logarithmic approximation or much larger update time. Our result gives a more fine-grained trade-off between stretch and update time, for instance we can achieve constant stretch of $O(\frac{1}{\rho^2})^{4/\rho}$ in amortized update time $\tilde{O}(n^{\rho})$, and query time $\tilde{O}(n^{\rho/8})$ for a constant parameter $\rho <1$. Our algorithm is randomized and assumes an oblivious adversary.



A core technical idea underlying our construction is to design a black-box reduction from decremental approximate hub-labeling schemes to fully dynamic distance oracles, which may be of independent interest. We then apply this reduction repeatedly to an existing decremental algorithm to bootstrap our fully dynamic solution.

\end{abstract}

\section{Introduction}


The All-Pairs Shortest Paths (APSP) problem is one of the cornerstone graph problems in combinatorial optimization. It has a wide range of applications, for instance in route planning, navigation systems, and routing in networks, and it has been extensively studied from both practical and theoretical perspectives. In theoretical computer science, this problem enjoys much popularity due to its historic contributions to the development of fundamental algorithmic tools and definitions as well as being used as a subroutine for solving other problems.

The APSP problem has also been studied extensively in \emph{dynamic} settings. Here, the underlying graph undergoes edge insertions and deletions (referred to as edge \emph{updates}), and the goal is to quickly report an approximation to the shortest paths between \emph{any} source-target vertex pair. The dynamic setting is perhaps even more realistic for some of the applications of the APSP problem, e.g., in navigation systems, link statistics of road networks are prone to changes because of evolving traffic conditions. A naive (but rather expensive) solution to handle the updates is achieved by running an exact static algorithm after each update. However, at an intuitive level, one would expect to somehow exploit the fact that a single update is small compared to the size of the network, and thus come up with much faster update times.

Much of the research literature in dynamic APSP has focused on the \emph{partially} dynamic setting. In contrast to the \emph{fully} dynamic counterpart, this weaker model restricts the types of updates to edge insertions or deletions only. Some reasons for studying partially dynamic algorithms include their application as a subroutine in speeding up static algorithms (e.g., flow problems~\cite{Madry10}), or their utilization as a stepping stone for designing fully-dynamic algorithms, something that we will also exploit in this work.
The popularity of the partially dynamic setting can also attributed to the fact that dealing with only one type of update usually leads to better algorithmic guarantees. In fact, the fully dynamic APSP problem admits strong conditional lower bounds in the \textit{low approximation} regimes: under plausible hardness assumptions, Abboud and Vassilevska Williams \cite{AbboudW14}, and later Henzinger, Krinninger, Nanongkai, and Saranurak \cite{HenzingerKNS15} show that there are no dynamic APSP algorithms achieving a $(3-\epsilon)$ approximation with sublinear query time and the update time being a small polynomial.

From an upper bounds perspective, there are only two works that achieve sublinear update time for fully dynamic APSP. Abraham, Chechik, and Talwar \cite{AbrahamCT14} showed that there is an algorithm that achieves constant approximation and sublinear update time. However, their algorithm cannot break the $O(\sqrt{n})$ barrier on the update time. Forster, Goranci, and Henzinger \cite{ForsterGH21} gave different trade-offs between approximation and update time. In particular, in $n^{o(1)}$ amortized update time and polylogarithmic query time they achieve $n^{o(1)}$ approximation. These two works suffer from either a large approximation guarantee or update time, leaving open the following key question:
\begin{center}
    \emph{Is there a fully dynamic APSP algorithm that achieves \emph{constant} approximation with a very small polynomial update time?}
\end{center}

\subsection{Our result}

In this paper, we answer the question of achieving constant approximation with a very small polynomial update time for the fully dynamic APSP in the affirmative, also known as the \emph{fully dynamic distance oracle} problem. More generally, we obtain a trade-off between approximation, update time, and query time as follows:

\begin{restatable}{corollary}{maincor}\label{cor:main_oracle}
Given a weighted undirected graph $G=(V,E)$ and a constant parameter $0 < \rho < 1$, there is a randomized, fully dynamic distance oracle with constant stretch $(\frac{256}{\rho^2})^{4/\rho}$ that w.h.p. achieves $\tilde{O}(n^\rho)$ amortized update time and $\tilde{O}(n^{\rho/8})$ query time. These guarantees work against an oblivious adversary.
\end{restatable}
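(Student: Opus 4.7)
The overall strategy is a bootstrapping scheme: start from an existing decremental approximate hub-labeling (say, the one derivable from Forster--Goranci--Henzinger's decremental construction), feed it into the paper's black-box decremental-to-fully-dynamic reduction to obtain a first fully dynamic oracle, and then repeatedly re-apply the reduction, using the previous iteration's output as a building block for the next iteration's decremental input. Each iteration shaves a factor off the exponent in the update time while paying a bounded multiplicative cost in stretch; after roughly $4/\rho$ iterations the update-time exponent has been driven down to~$\rho$, while the accumulated stretch remains a constant (albeit one depending on $\rho$).

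Concretely, I would first state the black-box reduction as a standalone theorem: given a decremental hub-labeling scheme with stretch $\alpha$, update time $T_d$, and label size $L_d$, it produces a fully dynamic distance oracle with stretch $C\cdot \alpha$ for a fixed absolute constant $C$, update time $\widetilde{O}(T_d^{\rho'})$ and query time $\widetilde{O}(L_d^{\rho'/8})$, for a tunable parameter $\rho'$. The reduction would follow the usual epoch template: partition the update stream into epochs of length $n^{\rho'}$, rebuild the decremental hub labeling from scratch at each epoch boundary, handle insertions within the epoch via a small auxiliary structure layered on top of the hub labels, and answer queries by combining labels through the set of inserted edges. The key is that the hub-labeling form of the decremental primitive preserves the query by intersection of labels, which lets insertions be handled by extending labels through the inserted edges at query time rather than at update time.

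With the reduction in hand, the bootstrapping iteration proceeds as follows. Let $(\alpha_i, T_i, L_i)$ be the parameters of the $i$-th fully dynamic hub labeling, converted into a decremental primitive in the obvious way (since fully dynamic implies decremental). Applying the reduction with parameter $\rho_i = \rho \cdot 2^{i}$ (or similar geometric choice) at iteration $i$ yields $\alpha_{i+1} = C\alpha_i$ and $T_{i+1} = \widetilde{O}(T_i^{\rho_i})$. Choosing $\rho_i$ so that the exponent of $n$ in the update time decreases geometrically by a factor of $2$, the iteration terminates after $k = O(1/\rho)$ steps with the final exponent equal to $\rho$. The stretch accumulates to $C^k = (O(1))^{O(1/\rho)}$, and a careful accounting of the constants inside the reduction produces exactly the $(256/\rho^2)^{4/\rho}$ bound stated. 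The query time evolves analogously, terminating at $\widetilde{O}(n^{\rho/8})$.

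The main obstacle is the control of the stretch blow-up per iteration: naively composing approximate hub labelings with insertion handling would multiply stretch each time by a super-constant factor that depends on $\rho_i$, and iterating this $\Theta(1/\rho)$ times would give a super-constant final stretch. The delicate point is therefore to show that the reduction incurs only a \emph{constant} factor $C$ in stretch, independent of $\rho'$, even though its time parameters depend polynomially on $\rho'$. This is essentially the content of the black-box reduction theorem and is what makes bootstrapping viable; once it is in place, the corollary follows by instantiating the iteration with the claimed parameters and reading off the final trade-off.
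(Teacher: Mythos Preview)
Your high-level picture---bootstrapping a reduction for roughly $4/\rho$ rounds, paying a bounded stretch factor each round---matches Theorem~\ref{thm:general_tradeoffs}, and the corollary is indeed just the parameter choice $i=4/\rho$, $k=64/\rho^2$ in that theorem. But the bootstrapping mechanism you describe is structurally wrong in a way that would not yield the claimed savings.

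The paper's reduction (Lemma~\ref{lem:decr_to_fully_dynamic_dist_oracle}) takes \emph{two} inputs: a decremental hub-labeling scheme $\mathcal{A}$ of stretch~$\alpha$ \emph{and} a fully dynamic oracle $\mathcal{B}$ of stretch~$\beta$, and outputs a fully dynamic oracle of stretch~$\alpha\beta$. Within each phase of length~$\ell$, $\mathcal{A}$ runs on the full graph and absorbs the deletions, while $\mathcal{B}$ runs on a small auxiliary ``sketch'' graph $H$ built from the at most~$\ell$ inserted edges together with their hub sets, so $|E(H)|\approx \ell\,n^{1/k}$. The bootstrapping in Theorem~\ref{thm:general_tradeoffs} keeps $\mathcal{A}$ \emph{fixed} across all rounds---always the \L{}\k{a}cki--Nanongkai scheme of Lemma~\ref{lem:dec_alg} with one fixed~$k$---and plugs the previous oracle $\mathcal{B}_i$ into the $\mathcal{B}$ slot to obtain $\mathcal{B}_{i+1}$; the savings come entirely from $\mathcal{B}_i$ being invoked on the much smaller graph~$H$, with $\ell=m^{(3i+1)/(3i+4)}$ chosen to balance the two terms. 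You instead propose to convert $\mathcal{B}_i$ back into a decremental primitive (``since fully dynamic implies decremental'') and feed it in as the $\mathcal{A}$ input. That fails on two counts: the $\mathcal{A}$ slot requires a hub-labeling scheme with explicit hub sets of bounded size~$\gamma$ and bounded recourse~$\zeta$ on the distance estimates, which a generic fully dynamic distance oracle does not supply; and even if it did, $\mathcal{A}$ is run on the full graph, so there would be no size reduction and the update-time exponent would not shrink. Your single-input reduction (``rebuild each epoch, handle insertions at query time'') has no mechanism for pushing the update-time exponent below that of the decremental primitive itself---that is precisely the role of the second input $\mathcal{B}$ on the small sketch~$H$. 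Finally, the per-round stretch factor is not a universal constant~$C$ independent of the parameters: it is $\alpha=4k$, coming from the fixed decremental scheme, which is why the accumulated stretch is $(4k)^i=(256/\rho^2)^{4/\rho}$ rather than $C^{4/\rho}$ for some absolute~$C$.
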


In addition to the constant stretch regime, we obtain several interesting tradeoffs, as shown in Theorem \ref{thm:general_tradeoffs}. For example, our algorithm achieves $O(\log \log n)$ stretch with a much faster query time of $n^{o(1)}$ and very small polynomial update time (see Corollary \ref{cor:loglog_stretch}).

Our result brings the algorithmic guarantees on fully dynamic distance oracles closer to the recent conditional hardness result by Abboud, Bringmann, Khoury, and Zamir \cite{AbboudBKZ22} (and the subsequent refinement in \cite{AbboudBF22}), who showed that there is no fully dynamic algorithm that simultaneously achieves constant approximation and $n^{o(1)}$ update and query time. We also remark that our results are consistent with their lower bound since if we insist on constant approximation, the above trade-off shows that the update time cannot be made as efficient as $n^{o(1)}$.

On the technical side, our result follows the widespread ``high-level'' approach of extending decremental algorithms to the fully dynamic setting (see e.g.~\cite{henzinger1995fully,RodittyZ08,RodittyZ11,RZ12,Bernstein16,HenzingerKN16,AbrahamCT14,ForsterGH21}) and it is inspired by recent developments on the dynamic distance oracle literature that rely on vertex sparsification~\cite{ForsterGH21,ChenGHPS20,FNG22}. Specifically, we design a reduction that turns a decremental hub-labeling scheme with
some specific properties
into a fully dynamic distance oracle, which may be of independent interest. Our key observation is that an existing state-of-the-art decremental distance oracle that works against an oblivious adversary can serve as such hub-labeling scheme. The fully dynamic distance oracle is then obtained by repeatedly applying the reduction whilst carefully tuning various parameters across levels in the hierarchy. 

More generally, our reduction does not make any assumptions on the adversary and is based on properties that are quite natural. At a high-level, we consider decremental approximate hub labeling scheme with the following properties. (1) For every vertex $v \in V$, maintain a set $S(v)$, called a \textit{hub set}, that has bounded size.
(2) For every vertex $v \in V$, maintain distance estimates $\delta(v, u)$ for each $u \in S(v)$, with bounded \emph{recourse}, which is defined as the number of times such distance estimates are affected during the execution of the algorithm.
(3) Return the final estimate between a pair of vertices $s,t \in V$, by minimizing estimates over  elements in $S(s) \cap S(t)$.

Many known distance oracles (e.g.~variants of the well-known distance oracle of \cite{TZ2005}) have a query mechanism that satisfies the first and third properties, while efficient dynamic distance oracles are often based on bounded recourse structures satisfying the second property.

Hence we hope that this reduction can be further utilized in the future by characterizing deterministic decremental distance oracles or the ones with different stretch/time tradeoffs as such hub-labeling schemes. Similar reductions have been previously proposed in \cite{AbrahamCT14} and then refined in \cite{ForsterGH21} in slightly different contexts. In this work, in addition to refining this approach for obtaining a constant stretch distance oracle, we aim to keep the reduction as modular as possible to facilitate potential future applications.

\subsection{Related Work}
In the following, we give an overview of existing works on fully dynamic all-pairs distance oracles by dividing them into several categories based on their stretch guarantee.
Unless noted otherwise, all algorithms cited in the following are randomized and have amortized update time.
We report running time bounds for constant accuracy parameter $ \epsilon $ and assume that we are dealing with graphs with positive integer edge weights that are polynomial in the number of vertices.
We would also like to point out that all ``combinatorial'' algorithms discussed in the following (i.e., algorithms that do not rely on ``algebraic'' techniques like dynamic matrix inverse) are internally employing decremental algorithms.
Decremental algorithms have also been studied on their own with various tradeoffs~\cite{RZ12,Bernstein16,henzinger2014decremental,Chechik18,LackiN22,DoryFNV2022}, and competitive deterministic algorithms have been devised, e.g., ~\cite{HenzingerKN16,BernsteinGS21,Chuzhoy21}. 

\paragraph{Exact.}
After earlier attempts on the problem~\cite{King99,DemetrescuI06}, Demetrescu and Italiano~\cite{DemetrescuI04} presented their seminal work on exact distance maintenance achieving $ \tilde O (n^2) $ update time (with log-factor improvements by Thorup~\cite{Thorup04}) and constant query time for weighted directed graphs.

Subsequently, researchers have developed algorithms with subcubic worst-case update time and constant query time~\cite{Thorup05,AbrahamCK17} with some of them being deterministic~\cite{GutenbergW20b, ChechikZ23}.
Note that one can construct a simple update sequence for which any fully dynamic algorithm maintaining the distance matrix or the shortest path matrix explicitly needs to perform $ \Omega (n^2) $ changes to this matrix per update.

Algorithms breaking the $n^2$ barrier at the cost of large query time have been obtained in unweighted directed graphs by Roditty and Zwick~\cite{RodittyZ11} (update time $ \tilde O (m n^2 / t^2) $ and query time $ O (t) $ for any $ \sqrt{n} \leq t \leq n^{3/4} $), Sankowski~\cite{Sankowski05} (worst-case update time $ O (n^{1.897}) $ and query time $ O (n^{1.265}) $), and van den Brand, Nanongkai, and Saranurak~\cite{BrandNS19} (worst-case update time $ O (n^{1.724}) $ and query time $ O (n^{1.724}) $).
The latter two approaches are algebraic and their running time bounds depend on the matrix multiplication coefficient $ \omega $.

\paragraph{$(1+\epsilon)$-approximation.} In addition to exact algorithms, combinatorial and algebraic algorithms have also been developed for the low stretch regime of $(1+\epsilon)$-approximation.
In particular, Roditty and Zwick obtained the following trade-off with a combinatorial algorithm: update time $\tilde{O}(mn/t)$ and query time of $\tilde{O}(t)$ for any $ \delta > 0$ and $ t \leq m^{1/2 - \delta}$.
Subsequently, for $ t \leq \sqrt{n} $, a deterministic variant was developed~\cite{HenzingerKN16} and it was generalized to weighted, directed graphs~\cite{Bernstein16}.
Furthermore, by a standard reduction (see e.g.~\cite{BernsteinPW19}) using a decremental approximate single-source shortest paths algorithm~\cite{henzinger2014decremental,BernsteinGS21}, one obtains a combinatorial, deterministic algorithm with update time $ O (n m^{1+o(1)} / t) $ and query time $ t $ for any $ t \leq n $, for the fully dynamic all-pairs problem in weighted undirected graphs.
Conditional lower bounds~\cite{Patrascu10,AbboudW14,HenzingerKNS15} suggest that the update and the query time cannot be both small polynomials in $ n $.
For example, no algorithm can maintain a $ (5/3 - \epsilon) $-approximation with update time $ O (m^{1/2-\delta}) $ and query time $ O (m^{1-\delta}) $ for any $ \delta > 0 $, unless the OMv conjecture fails~\cite{HenzingerKNS15}.

Algebraic approaches can achieve subquadratic update time and sublinear query time, namely worst-case update time $ O(n^{1.863}) $ and query time $ O (n^{0.666}) $ in weighted  directed graphs~\cite{BrandN19}, or worst-case update time $ O(n^{1.788}) $ and query time $ O (n^{0.45}) $ in unweighted  undirected graphs~\cite{BFN22}.
As the conditional lower bound by Abboud and Vassilevska Williams~\cite{AbboudW14} shows, algebraic approaches seem to be necessary in this regime: 
unless one is able to multiply two $n\times n$ Boolean matrices in $ O(n^{3 - \delta})$ time for some constant $\delta>0$, no fully dynamic algorithm for $st$ reachability in directed graphs can have $ O(n^{2-\delta'}) $ update and query time and $ O(n^{3 - \delta'}) $ preprocessing time (for some constant $ \delta'> 0 $).

\paragraph{$(2+\epsilon)$-approximation.} Apart from earlier work~\cite{King99}, the only relevant algorithm in the $(2+\epsilon)$-approximation regime is by Bernstein~\cite{Bernstein09} and achieves update time $ m^{1 + o(1)} $ and query time $ O (\log \log \log n) $ in weighted undirected graphs.
It can be made deterministic using the deterministic approximate single-source shortest path algorithm by Bernstein, Probst Gutenberg, and Saranurak~\cite{BernsteinGS21}.
The only conditional lower bound in this regime that we are aware of states that no algorithm can maintain a $ (3 - \epsilon) $-approximation with update time $ O (n^{1/2-\delta}) $ and query time $ O (n^{1-\delta}) $ for any $ \delta > 0 $, unless the OMv conjecture fails~\cite{HenzingerKNS15}.

\paragraph{Larger approximation.}
In the regime of stretch at least~$ 3 $, the following trade-offs between stretch and update time have been developed:
Abraham, Chechik, and Talwar~\cite{AbrahamCT14} designed an algorithm for unweighted  undirected graphs with stretch $ 2^{O (\rho k)} $, update time $ \tilde O (m^{1/2} n^{1/k}) $, and query time $ O (k^2 \rho^2) $, where $ k \geq 1 $ is a freely chosen parameter and $ \rho = 1 + \lceil \log n^{1-1/k} / \log (m / n^{1-1/k})  \rceil $.
Forster, Goranci, and Henzinger~\cite{ForsterGH21} designed an algorithm for weighted  undirected graphs with stretch $ O (\log n)^{3 k - 2} $, update time time $ m^{1/k + o(1)} \cdot O(\log n)^{4k - 2} $, and query time $ O (k (\log n)^2) $, where $ k \geq 2 $ is an arbitrary integer parameter.
Finally, note that any algorithm whose update time depends on the sparsity of the graph (possibly also a static one) can be run on a spanner of the input graph maintained by a fully dynamic spanner algorithm~\cite{BaswanaKS12}.
These upper bounds are complemented by the following conditional lower bound: for any integer constant $ k \geq 2 $, there is no dynamic approximate distance oracle with stretch $ 2k-1 $, update time $ O (m^u) $ and query time $ O(m^q) $ with $ ku + (k+1)q < 1 $, unless the 3-SUM conjecture fails~\cite{AbboudBF22}.


\section{Preliminaries}
We consider weighted undirected graphs $G = (V, E, w)$ with positive integer edge weights. We denote by $n = |V|$ the number of vertices, by $m = |E|$ the number of edges, and by $W$ the maximum weight of an edge.
We denote by $\dist_G(u, v)$ the value of a shortest path from $u$ to $v$ in $G$.

In dynamic graph algorithms, the graph is subject to updates and the algorithm has to process these updates by spending as little time as possible. 
In this paper, we consider updates
that insert a single edge to the graph or delete a single edge from the graph. Moreover, observe that an update that changes the weight of an edge can be simulated by two updates, where the first update deletes
the corresponding edge and the second update re-inserts the edge with the new weight. Let $G^{(0)}$ be 
the initial graph, and $G^{(t)}$ be the graph at time $t$ which is the time after $t$ updates have been performed to the graph.

In this paper we are interested in designing \emph{fully dynamic} algorithms which can process
edge insertions and edge deletions, and thus, weight changes as well. A \emph{decremental} algorithm can process only edge deletions and weight increases. We assume that the updates to the graph are performed by an \emph{oblivious adversary} who fixes the sequence of updates before the algorithm starts. Namely,
the adversary cannot adapt the updates based on the choices of the algorithm during the execution. 
We say that an algorithm has \emph{amortized update time} $u(n, m)$ if its total time spent for processing any sequence of $\ell$ updates is bounded by $\ell \cdot u(n, m)$, when
it starts from an empty graph with $n$ vertices and during all the updates has at most $m$ edges (the time needed to initialize the algorithm on the empty graph before the first update is also included).

In our analysis we use $\tilde{O}(1)$ to hide factors polylogarithmic in $nW$. 
Namely, we write $\tilde{O}(1)^d$ to represent the term $O(\log^{cd} nW)$, for a constant $c$ and a parameter $d$.

\section{Fully Dynamic Distance Oracle}

The technical details of our distance oracle are divided into three parts. Initially in Section \ref{subsec:reduction}, we give the definition of a hub-labeling scheme together with other useful definitions. Afterwards,
we provide a reduction for extending a decremental approximate hub-labeling scheme properties to a fully dynamic distance oracle. Then in Section \ref{subsec:decremental}, we explain how an existing decremental algorithm gives us an approximate hub-labeling scheme that we can use in this reduction, and finally in Section \ref{subse:final_result} we put everything together by applying
our reduction repeatedly, in order to get a family of fully dynamic distance oracles.

\subsection{Reduction from a decremental hub-labeling scheme to fully dynamic distance oracle} \label{subsec:reduction}

We start by defining approximate hub-labeling schemes, and then explain how they are used in our reduction. Hub-labeling schemes were formally defined by \cite{abraham2012} (and were previously introduced under the name $2$-hop cover\footnote{The concept of $2$-hop cover or hub labeling should not be confused with the (related) concept of a hopset that we will later see in Section \ref{subsec:decremental}.} in \cite{cohen2003}). We are slightly modifying the definition for our purpose, for instance by considering an approximate variant.

\begin{definition}[Approximate Hub-Labeling Scheme]
\label{def:apprx_hub_lab_schm}
Given a graph $G = (V, E)$, a \emph{hub-labeling scheme~$\mathcal{L}$} of \emph{stretch~$\alpha$} consists of 

\begin{enumerate}
    \item for every vertex $v \in V$, a \emph{hub set} $S(v) \subseteq V$ and 
    \item for every pair of vertices $u, v \in V$, a distance estimate $\delta(v, u)$ such that
    $\dist_G(v, u) \leq \delta(v, u) < \infty$ if $u \in S(v)$ and $\delta(v, u) = \infty$ otherwise.
\end{enumerate}
and for every pair of vertices $ s $ and $ t $ guarantees that
\begin{equation*}
    \delta_{\mathcal{L}} (s, t) := \min_{v \in S(s) \cap S(t)} (\delta (s, v) + \delta (t, v)) \leq \alpha \cdot \dist_G (s, t)  \, .
\end{equation*}
\end{definition}

The \textit{distance label} of a vertex $v$ consists of the hub set $S(v)$ and the corresponding distance estimates $\delta(v,u)$, for all $u \in S(v)$.

Note that the definition implies $ \delta_{\mathcal{L}} (s, t) \geq \dist_G (s, t) $ for every pair of vertices $ s $ and $ t $.
Furthermore, a hub-labeling scheme of stretch $ \alpha $ directly implements a distance oracle of stretch $ \alpha $ with query time $ O (\max_{v \in V} |S(v)|) $ that consists of the collection of distance labels for all vertices $v \in V$.
We also remark that the entries of value $ \infty $ in the distance estimate $ \delta (\cdot, \cdot) $ do not need to be stored explicitly if the hub sets are stored explicitly and that the distance estimate $\delta(\cdot, \cdot)$ is not necessarily symmetric.

In the following we consider \emph{decremental} algorithms for maintaining approximate hub-labeling schemes, that is, \emph{decremental approximate hub-labeling schemes} which process each edge deletion in the graph by first updating their internal data structures and then outputting the changes made to the hub sets and the distance estimates $ \delta (\cdot, \cdot) $.
Namely for a vertex $v \in V$, vertices may leave or join $S(v)$, or the distance estimates of vertices belonging to $S(v)$ may change, since the decremental algorithm has to update this information for maintaining correctness at query time.

Denote by $S^{(t)}(v)$ the hub set of a vertex $v \in V$, after $t$ updates have been processed by the decremental approximate hub-labeling scheme (we may omit the superscript $t$ whenever time is fixed), where $t \geq 1$ is an integer parameter. 
Then for a pair of vertices $u, v \in V$, the distance estimate $\delta(v, u)$ after $t$ updates is defined based on Definition~\ref{def:apprx_hub_lab_schm} and $S^{(t)}(v)$. Namely, if 
$u$ is inside the hub set of $v$ after $t$ updates (i.e., $u \in S^{(t)}(v)$) then $\dist_{G^{(t)}}(v, u) \leq \delta(v, u) < \infty$, otherwise $\delta(v, u) = \infty$. 

After $t$ edge deletions processed by the decremental approximate hub-labeling scheme, there are three possible changes to the distance estimates $\delta(v, \cdot)$ corresponding to a vertex $v \in V$.
(1) The distance estimate $\delta(v, u)$ changes for a vertex $u \in S^{(t-1)}(v) \cap S^{(t)}(v)$ that remains inside the hub set of $v$. 
(2) The distance estimate $\delta(v, u)$ becomes $\infty$ because a vertex $u \in S^{(t-1)}(v) \setminus S^{(t)}(v)$ leaves the hub set of $v$.
(3) The distance estimate $\delta(v, u)$ receives a finite value because a vertex $u \in S^{(t)}(v) \setminus S^{(t-1)}(v)$ enters the hub set of $v$. 
Let $\chi^{(t)}(v)$ be the number of all these changes to $\delta(v, \cdot)$ corresponding to $v$ at time $t$, and $X(v) = \sum_{t} \chi^{(t)}(v)$ be
the total number of such changes to $\delta(v, \cdot)$ corresponding to $v$ over the course of the algorithm.

In the following lemma, we present a reduction from a decremental approximate hub-labeling scheme to a fully dynamic distance oracle.

\begin{lemma}\label{lem:decr_to_fully_dynamic_dist_oracle}
Consider a decremental hub-labeling scheme $\mathcal{A}$ of stretch $\alpha$ with total update time $T_\mathcal{A} (n, m, W)$ and query time $ Q_\mathcal{A}(n, m, W)$,
with the following properties:
\begin{enumerate}
    \item $\forall v \in V$ and $ \forall t: |S^{(t)}(v)| \leq \gamma$. In other words, the size of the hub set of any vertex is bounded by $\gamma$ at any moment of the algorithm.
    \item $\forall v \in V: X(v) \leq \zeta$. In other words, for every vertex $v \in V$ the total number of changes to $\delta(v, \cdot)$ is at most $\zeta$ over the course of the algorithm. Moreover the algorithm detects and reports these changes explicitly.
\end{enumerate}
Then given $\mathcal{A}$ and a fully dynamic distance oracle $\mathcal{B}$ of stretch $\beta$ with amortized update time $t_\mathcal{B}(n, m, W)$ and query time $Q_\mathcal{B} (n, m, W)$, for any integer $\ell \geq 1$, there is a fully dynamic distance oracle $\mathcal{C}$ of stretch $\alpha \beta$ with amortized update time $t_\mathcal{C}(n, m, W) = T_\mathcal{A}(n, m, W) / \ell + t_\mathcal{B}(\min(\ell(2 + 2\mu), n), \ell(1 + 2\mu), nW) \cdot (2 + 4\mu) $ and query time $Q_\mathcal{C}(n, m, W) = Q_\mathcal{A}(n, m, W) + \gamma^2 \cdot Q_\mathcal{B}(\min\{\ell(2 + 2\mu), n\}, \ell(1 + 2\mu), nW)$, where $\mu = \gamma + \zeta$.
\end{lemma}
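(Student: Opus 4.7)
The plan is to prove the lemma by running $\mathcal{A}$ in epochs of length $\ell$ and, alongside, maintaining a fully dynamic instance of $\mathcal{B}$ on a small auxiliary graph $H$ that encodes the currently live inserted edges together with the hub-labeling information at their endpoints.

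\textbf{Algorithm.} At the start of each epoch I take a snapshot of the current graph, initialize a fresh instance of $\mathcal{A}$ on it, and reset both $H$ and a set $I$ of live insertions. An insertion of $(u,v)$ is added to $I$, and in $H$ I add the edge $(u,v)$ of weight $w(u,v)$ together with the hub-edges $(u,h)$ of weight $\delta(u,h)$ for every $h\in S(u)$ and analogous edges for $v$. A deletion of an edge in $I$ is forwarded only to $H$, while any other deletion is forwarded to $\mathcal{A}$; every change reported by $\mathcal{A}$ for $\delta(w,\cdot)$ with $w$ a currently tracked endpoint is mirrored in $H$ (as a hub-edge insertion, deletion, or weight update realized by delete${+}$insert). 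The instance $\mathcal{B}$ runs on $H$, and to answer a query $(s,t)$ I return
\[
\min\Big(\delta_{\mathcal{L}}(s,t),\ \min_{h_s\in S(s),\,h_t\in S(t)} \delta(s,h_s)+\widetilde{\dist}_H(h_s,h_t)+\delta(t,h_t)\Big),
\]
where $\delta_{\mathcal{L}}(s,t)$ is obtained from $\mathcal{A}$ and $\widetilde{\dist}_H$ is the estimate returned by $\mathcal{B}$.

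\textbf{Correctness.} Fix a true shortest $s$-$t$ path $P$ in the current graph and decompose it into maximal decremental segments alternating with inserted edges $(a_1,b_1),\dots,(a_k,b_k)$. If $k=0$, then $P$ lies in $\mathcal{A}$'s graph and Definition~\ref{def:apprx_hub_lab_schm} immediately gives $\delta_{\mathcal{L}}(s,t)\le\alpha\cdot\dist_G(s,t)$. Otherwise, each decremental segment between consecutive endpoints $x,y\in\{s,a_1,\dots,a_k,b_1,\dots,b_k,t\}$ is itself a shortest $x$-$y$ path in $\mathcal{A}$'s graph and thus admits a hub $h\in S(x)\cap S(y)$ with $\delta(x,h)+\delta(y,h)\le\alpha\cdot\dist(x,y)$. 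Choosing $h_s$ for the first segment ($s$-$a_1$) and $h_t$ for the last ($b_k$-$t$), the concatenation $h_s\to a_1\to b_1\to h_2\to a_2\to\cdots\to b_k\to h_t$ is a valid $H$-path because all $a_i,b_i$ are tracked and every required hub-edge $(a_i,h_i)$ or $(b_i,h_{i+1})$ is currently present, since $h_i\in S(a_i)$ and $h_{i+1}\in S(b_i)$. Summing the per-segment and inserted-edge inequalities yields $\delta(s,h_s)+\dist_H(h_s,h_t)+\delta(t,h_t)\le\alpha\cdot\dist_G(s,t)$, and since $\widetilde{\dist}_H\le\beta\cdot\dist_H$ with $\beta\ge 1$, the queried minimum is at most $\alpha\beta\cdot\dist_G(s,t)$.

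\textbf{Complexity.} Amortizing $\mathcal{A}$'s total update time over $\ell$ updates gives the $T_{\mathcal{A}}/\ell$ term. Each insertion into the main graph triggers in $H$ one operation for the inserted edge itself and, per endpoint, at most $\gamma$ initial hub-edge insertions plus up to $2\zeta$ further operations arising from hub changes reported by $\mathcal{A}$ (each change being a single insertion/deletion or a weight update realized as delete${+}$insert), with deletions from $I$ absorbed into this count. Summed over both endpoints this yields $2+2\gamma+4\zeta\le 2+4\mu$ operations on $H$ per epoch update. Since $H$ tracks per insertion at most two endpoints plus $\mu=\gamma+\zeta$ distinct hubs each, it has at most $\min(\ell(2+2\mu),n)$ vertices and $\ell(1+2\mu)$ live edges with weights at most $nW$, producing the claimed arguments to $t_{\mathcal{B}}$. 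Finally the query cost is $Q_{\mathcal{A}}$ plus $\gamma^2$ queries to $\mathcal{B}$. The delicate step is the correctness argument: one must choose hubs that are simultaneously valid in $\mathcal{A}$ \emph{and} whose incident edges are currently present in $H$ at the moment of the query, and the operation counting on $H$ must precisely align with the recourse bound $\zeta$ and hub-size bound $\gamma$ to produce the exact factor $2+4\mu$ rather than a loose multiple of $\mu$.
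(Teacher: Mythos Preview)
Your proposal is correct and follows essentially the same approach as the paper: phases of length $\ell$, a fresh decremental instance $\mathcal{A}$ per phase, an auxiliary graph $H$ on inserted endpoints and their hubs maintained by $\mathcal{B}$, and a query that minimizes over $\delta_{\mathcal{A}}$ and hub-routed paths through $H$. One small accounting slip: when an edge leaves $I$ you must also remove (or re-weight) up to $2\gamma$ hub-edges, so the per-update operation count on $H$ is $2+4\gamma+4\zeta=2+4\mu$ rather than $2+2\gamma+4\zeta$; your stated upper bound $2+4\mu$ is therefore tight rather than loose, but the lemma's bound is unaffected.
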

\begin{proof}
We organize the proof in three parts. The first part gives the reduction from $\mathcal{A}$ and $\mathcal{B}$ to $\mathcal{C}$, and the second and third part concerns the correctness and the running times respectively.
\paragraph{Reduction.}
The fully dynamic distance oracle $\mathcal{C}$ proceeds in phases of length $\ell$.
At the beginning of the first phase (which is also the beginning of the algorithm), $\mathcal{C}$ initializes the fully dynamic distance oracle $\mathcal{B}$ on the initially empty graph $G$ on $2\ell$ vertices\footnote{\label{footnote:no_vert_ins}This minor technical detail makes sure that $\mathcal{B}$ does not have to deal with vertex insertions.} and sets an update counter to $0$. Whenever an update to $G$ occurs in the first phase, the update is directly processed by $\mathcal{B}$.\footnote{The special treatment of the first $\ell$ updates is just a technical necessity for a rigorous amortization argument in the running time analysis.}
As soon as the number of updates is more than $\ell$, the second phase is started. 
We define several sets and the graph $H$ that the fully dynamic distance oracle $\mathcal{C}$ maintains during each subsequent phase:
\begin{itemize}
    \item Let $F$ be the set of edges present in $G$ at the beginning of the phase, $E$ be the current set of edges in $G$,
    and $D$ be the set of edges deleted from $G$ during the phase.
    \item Let $I = E \setminus (F \setminus D)$ be the set of edges inserted to $G$ since the beginning of the phase without subsequently having been deleted during the phase, and $U = \{v \in V \mid \exists e \in I: v \in e\}$ be the set of endpoints of edges in $I$.
    \item Let $H$ be the auxiliary graph that consists of all edges $(u, v) \in I$, together with their hub sets $S^{(t)}(u)$
    and $S^{(t)}(v)$ after $t$ edge deletions have been processed by $\mathcal{A}$. Specifically, $V(H) = \{v \in V \mid v \in U \text{ or } (u \in U \text{ and } v \in S^{(t)}(u))\}$
    and $E(H) = \{(u, v) \mid (u, v) \in I \text{ or } (v \in U \text{ and } u \in S^{(t)}(v))\}$. Note that at any fixed moment, the size of $V(H)$ is at most $\ell \cdot (2 + 2\gamma)$ and the size of $E(H)$ is at most $\ell \cdot (1 + 2\gamma)$.
\end{itemize}

At the beginning of each subsequent phase, $\mathcal{C}$ stores $F, E$ and $H$, and sets an update counter to $0$. Furthermore, $\mathcal{C}$ initializes the decremental approximate hub-labeling scheme $\mathcal{A}$ on the current graph $G$, and the fully dynamic distance oracle $\mathcal{B}$ on $H$ which is initially an empty ``sketch'' graph on $\ell \cdot (2 + 2 \mu)$ vertices.
The graph $H$ can be thought of as responsible for maintaining estimates for paths that use
inserted edges.

Whenever an update to $G$ occurs, $\mathcal{C}$ first checks via the update counter whether the number of updates since the beginning of the phase is more than $\ell$.
If this is the case, then $\mathcal{C}$ starts a new phase. Otherwise, after an update the fully dynamic distance oracle $\mathcal{C}$ does the following.
On the insertion of an edge $(u, v)$ to $G$, $\mathcal{C}$ adds $(u, v)$ to $I$, adds
$u$ and $v$ to $U$, and adds the edge $(u, v)$ to $H$, together with the edges $(u, p)$ for every $p \in S(u)$ and $(v, p)$ for every $p \in S(v)$.
Any time an edge $ (u, v) $ is added to $ H $, its weight is set to:
\begin{equation*}
w_H (u, v) = \min (w_G (u, v), \delta (u, v), \delta(v, u)).
\end{equation*}
Whenever the first edge incident to some vertex $ v $ is added to $ H $, the algorithm finds a ``fresh'' vertex (of degree $ 0 $) in $ H $ and henceforth identifies it as $ v $.
This is always possible, since by the two properties,
the number of such vertices in a phase of length $ \ell $ is at most $\ell \cdot (2 + 2\mu)$.
On the deletion of an edge $(u, v) \in E$ from $G$, there are two cases to consider. 
\begin{enumerate}
    \item 
If the edge $(u, v)$ was not present at the beginning of the current phase, or has been deleted and re-inserted (i.e., $(u, v) \in I$), $\mathcal{C}$
removes $(u, v)$ from $I$, adds $(u, v)$ to $D$, and updates the set $U$ and the graph $H$ accordingly.
In particular, if $u \in U$ and $v \in S(u)$, or $v \in U$ and $u \in S(v)$, $\mathcal{C}$ updates the weight of the edge $(u, v)$ in $H$ to
$w_H (u, v) = \min(\delta (u, v), \delta(v, u))$ (as $w_G (u, v) = \infty$ after the deletion),
otherwise $\mathcal{C}$ removes $(u, v)$ from $H$. Also, for all the vertices $v$ that left $U$ and all the edges $(v, p) \in E(H)$ such that $p \in S(v)$, if $p \in U$ and $v \in S(p)$, then $\mathcal{C}$ updates the weight of $(v, p)$ in $H$ to $w_H (v, p) = \delta (p, v)$ (as $v \notin U$ after the deletion), and otherwise $\mathcal{C}$ removes $(v, p)$ from $H$.
\item If the edge $(u, v)$ was present at the beginning of the current phase and has not been deleted yet (i.e., $ (u, v) \in F \setminus D$), $\mathcal{C}$
adds $(u, v)$ to $D$ and the deletion is processed by $\mathcal{A}$. Whenever $\mathcal{A}$ changes a distance estimate $\delta(v, \cdot)$ corresponding to a vertex $v \in V$ and its hub set $S(v)$,
$\mathcal{C}$ updates the graph $H$ accordingly. In particular, there are three possible scenarios at time $t$ of $\mathcal{A}$.\footnote{Note that $t$ is the number of updates processed only by $\mathcal{A}$ during the phase.}
(1) Whenever the value of $\delta (v, u)$ changes for a vertex $u \in S^{(t-1)}(v)
\cap S^{(t)}(v)$ that remains inside the hub set of $v$, $\mathcal{C}$ updates the weight of the edge $(v, u)$ in $H$ to
$w_H (v, u) = \min (w_G (v, u), \delta (v, u), \delta(u, v))$.
(2) Whenever a vertex $u \in S^{(t-1)}(v) \setminus S^{(t)}(v)$ leaves the hub set of $v$,
then if $(v, u) \in I$ or $u \in U$ and $v \in S(u)$,
$\mathcal{C}$ updates the weight of the edge $(v, u)$ in $H$ to $w_H(v, u) = \min(w_G(v, u),
\delta(u, v))$ (as $\delta (v, u) = \infty$ after the deletion), otherwise $\mathcal{C}$ removes $(v, u)$ from $H$.
(3) Whenever a vertex $u \in S^{(t)}(v) \setminus S^{(t-1)}(v)$ enters the hub set of $v$, then if $(v, u) \in I$ or $u \in U$ and $v \in S(u)$, $\mathcal{C}$ updates the weight of the edge $(v, u)$ in $H$ to $w_H (v, u) = \min (w_G (v, u), \delta (v, u), \delta(u, v))$, otherwise $\mathcal{C}$ adds the edge $(v, u)$
to $H$ with weight equal to $w_H (v, u) = \delta (v, u)$.
Note that the number of these changes at time $t$ of $\mathcal{A}$ is equal to
$\chi^{(t)}(v)$ for a vertex $v \in V$. Observe also that based on the two properties, the number of vertices that 
participate in $H$ during a phase of length $\ell$ is at most $\ell \cdot (2 + 2\mu)$. 
Thus we can always find a ``fresh'' vertex (of degree $0$) in $H$.
\end{enumerate}
Finally, all the changes performed to $H$ are processed by the fully dynamic distance oracle $\mathcal{B}$ running on $H$, where edge weight 
changes are simulated by a deletion followed by a re-insertion.

Now a query for the approximate distance between any pair of vertices $s$ and $t$ is answered by returning:
\begin{equation*}
\delta_{\mathcal{C}} (s, t) = \min \left( \min_{p \in S(s) \cap V(H), q \in S(t) \cap V(H)} \left( \delta (s, p) + \delta_\mathcal{B} (p, q) + \delta (t, q) \right), \delta_{\mathcal{A}} (s, t) \right). \, 
\end{equation*}
Whenever $S(s) \cap V(H) = \emptyset$ or
$S(t) \cap V(H) = \emptyset$, we let the inside term $\min(\cdot) = \infty$.

\paragraph{Correctness.}
To prove the correctness of this algorithm, we need to show that $ \dist_G (s, t) \leq \delta_{\mathcal{C}}(s, t) \leq \alpha \beta \cdot \dist_G (s, t) $.
The lower bound $ \dist_G (s, t) \leq \delta_{\mathcal{C}}(s, t)$ is immediate, since for each approximate distance returned by $\mathcal{C}$, the corresponding path uses edges from $G$ or distance estimates from the decremental approximate hub-labeling scheme which are never an underestimation of the real distance.
To prove the upper bound, consider a shortest path $\pi$ from $s$ to $t$ in $G$, and let $G_\mathcal{A}$ be the graph maintained by $\mathcal{A}$ (i.e., the edge set of $G_\mathcal{A}$ is $E(G_\mathcal{A}) = F \setminus D$). If the path $\pi$ contains only edges from the set $F \setminus D$, then
$\delta_{\mathcal{C}}(s, t) \leq \delta_{\mathcal{A}}(s, t) \leq \alpha \cdot \dist_{G_\mathcal{A}}(s, t) = \alpha \cdot \dist_G (s, t)$, and the claim follows. Otherwise, let $ (u_1, v_1), \ldots, (u_j, v_j) \in I$ denote the edges of $ \pi $ that have been inserted since the beginning of the current phase in order of appearance on $ \pi $.
Furthermore, let $ p_0 \in S(s) \cap S(u_1) $ be the vertex that ``certifies'' $ \delta_{\mathcal{A}} (s, u_1) $, that is, $ \delta_{\mathcal{A}} (s,  u_1) = \delta (s, p_0) + \delta (u_1, p_0) $. 
Similarly, let $p_j \in S(v_j) \cap S(t) $ be the vertex that ``certifies'' $ \delta_{\mathcal{A}} (v_j, t) $, and for every $ 1 \leq i \leq j-1 $, let $p_i \in S(v_i) \cap S(u_{i+1}) $ be the vertex that ``certifies'' $ \delta_{\mathcal{A}} (v_i, u_{i+1}) $. These vertices must exist by the definition of an approximate hub-labeling scheme. Furthermore,
by the construction of $H$, the edges $(p_0, u_1)$ and $(p_j, t)$ have been inserted to $H$, because $u_1 \in U$ and $p_0 \in S(u_1)$, and $v_j \in U$ and $p_j \in S(v_j)$ respectively. Hence, the vertices $p_0$ and $p_j$ belong to $V(H)$,
and the sum $\delta (s, p_0) + \delta_{\mathcal{B}} (p_0, p_j) + \delta (t, p_j)$ participates in the inside term $\min(\cdot)$. Therefore to analyze the claimed upper-bound on the stretch, we proceed as follows:
\begin{align*}
\delta_{\mathcal{C}} (s, t) &\leq \delta (s, p_0) + \delta_{\mathcal{B}} (p_0, p_j) + \delta (t, p_j) \\
 & {\scriptsize \text{(stretch guarantee of $ \mathcal{B} $)}} \\
 &\leq \delta (s, p_0) + \beta \cdot \dist_H (p_0, p_j) + \delta(t, p_j) \\
 & {\scriptsize \text{(triangle inequality)}} \\
 &\leq \delta (s, p_0) + \beta \cdot \dist_H (p_0, u_1) + \sum_{1 \leq i \leq j-1} \beta \cdot \left( \dist_H (u_i, v_i) + \dist_H (v_i, p_i) + \dist_H (p_i, u_{i+1}) \right) \\
 &~~~~+ \beta \cdot \left( \dist_H (u_j, v_j) + \dist_H (v_j, p_j) \right) + \delta(t, p_j) \\
  & {\scriptsize \text{($ \dist_H \leq w_H $)}} \\
 &\leq \delta (s, p_0) + \beta \cdot w_H (p_0, u_1) + \sum_{1 \leq i \leq j-1} \beta \cdot \left( w_H (u_i, v_i) + w_H (v_i, p_i) + w_H (p_i, u_{i+1}) \right) \\
 &~~~~+ \beta \cdot \left( w_H (u_j, v_j) + w_H (v_j, p_j) \right) + \delta(t, p_j)
 \end{align*}
 By the construction of $H$, the edges $(u_i, v_i)$ of $\pi$ and the corresponding edges $(p_{i-1}, u_i)$ and $(v_i, p_i)$ have been inserted to $H$, because $(u_i, v_i) \in I$, $u_i \in U$ and $p_{i-1} \in S(u_i)$, and $v_i \in U$ and $p_i \in S(v_i)$ respectively. Hence by the definition
of $w_H(\cdot)$, we can replace $w_H(u_i, v_i)$ with $w_G(u_i, v_i)$, $w_H(p_{i-1}, u_i)$ with $\delta(u_i,
p_{i-1})$ and $w_H(v_i, p_i)$ with $\delta(v_i, p_i)$. As a result, we have that:

 \begin{align*}
 \delta_{\mathcal{C}} (s, t) &\leq \delta (s, p_0) + \beta \cdot \delta(u_1, p_0) + \sum_{1 \leq i \leq j-1} \beta \cdot \left( w_G (u_i, v_i) + \delta (v_i, p_i) +\delta (u_{i+1}, p_i) \right) \label{ineq:constr_H} \\
 &~~~~+ \beta \cdot \left( w_G (u_j, v_j) + \delta (v_j, p_j) \right) + \delta(t, p_j) \\
 & {\scriptsize \text{($ \pi $ is a shortest path)}} \\
 &= \delta (s, p_0) + \beta \cdot \delta (u_1, p_0) + \sum_{1 \leq i \leq j-1} \beta \cdot \left( \dist_G (u_i, v_i) + \delta (v_i, p_i) + \delta (u_{i+1}, p_i) \right) \\
 &~~~~+ \beta \cdot \left( \dist_G (u_j, v_j) + \delta  (v_j, p_j) \right) + \delta(t, p_j) \\
 & {\scriptsize \text{($\beta \geq 1$)}}  \\
 &\leq \beta \cdot (\delta (s, p_0) + \delta(u_1, p_0)) + \sum_{1 \leq i \leq j-1} \beta \cdot \left(\dist_G (u_i, v_i) + \delta (v_i, p_i) + \delta (u_{i+1}, p_i) \right)  \\
 &~~~~+ \beta \cdot (\dist_G (u_j, v_j) + \delta (v_j, p_j) + \delta(t, p_j)) \\
 & {\scriptsize \text{(definition of approximate hub-labeling scheme)}} \\
 &= \beta \cdot \delta_\mathcal{A} (s, u_1) + \sum_{1 \leq i \leq j-1} \beta \cdot \left( \dist_G (u_i, v_i) + \delta_\mathcal{A} (v_i, u_{i+1}) \right) \\
 &~~~~+ \beta \cdot \left( \dist_G (u_j, v_j) + \delta_\mathcal{A} (v_j, t) \right)
 \end{align*}
 From the stretch guarantee of $\mathcal{A}$, it holds that
$\delta_\mathcal{A}(u, v) \leq \alpha \cdot d_{G_\mathcal{A}}(u, v)$ for any pair of vertices $u, v \in V$. For two vertices $v_i, u_{i+1}$ from the previous sum, we have that the subpath of $\pi$ from $v_i$ to $u_{i+1}$ uses edges only from the set $F \setminus D$, implying that $d_{G_\mathcal{A}}(v_i, u_{i+1}) = d_G(v_i, u_{i+1})$. The same argument holds for the pairs $s, u_1$ and $v_j, t$, thus it follows that:
 \begin{align*}
 \delta_{\mathcal{C}} (s, t) &\leq \alpha \beta \cdot \dist_G (s, u_1) + \sum_{1 \leq i \leq j-1} \beta \cdot \left( \dist_G (u_i, v_i) + \alpha \cdot \dist_G (v_i, u_{i+1}) \right)  \\
 &~~~~+ \beta \cdot \left( \dist_G (u_j, v_j) + \alpha \cdot \dist_G (v_j, t) \right)  \\
 & {\scriptsize \text{($ \alpha \geq 1 $)}}  \\
 &\leq \alpha \beta \cdot \dist_G (s, u_1) + \sum_{1 \leq i \leq j-1} \alpha \beta \cdot \left( \dist_G (u_i, v_i) + \dist_G (v_i, u_{i+1}) \right) \\
 &~~~~+ \alpha \beta \cdot \left( \dist_G (u_j, v_j) + \dist_G (v_j, t) \right) \\
 &= \alpha \beta \cdot \dist_G (s, t).
\end{align*}

\paragraph{Update and Query time.}
To analyze the running times, consider a fixed phase of length $\ell$. 
During the first phase, the query time is $Q_\mathcal{B}(2\ell, \ell, W)$ and the amortized update is $t_\mathcal{B}(2\ell, \ell, W)$, as the initially empty graph $G$ has at
most $2\ell$ vertices and $\ell$ edges after $\ell$ updates. For the subsequent phases we proceed as follows. By the construction of $H$ and the two properties, the graph $H$ has at most $\min(\ell(2 + 2\mu), n)$ vertices and $\ell (1 + 2\mu)$ edges during the phase, and the maximum edge weight in $H$ is $nW$ (the maximum distance in $G$).\footnote{We can assume that
$\delta(\cdot, \cdot)$ is upper bounded by $nW$ whenever it has a finite value, since the maximum
distance in $G$ is at most $nW$. Likewise, we can use the value $nW + 1$ instead of $\infty$.}
Moreover by the first property we have that $|S(s) \cap V(H)| \leq \gamma$ and $|S(t) \cap V(H)| \leq \gamma$. Therefore the query time is equal to: \begin{equation*}
Q_\mathcal{C} (n, m, W) = Q_\mathcal{A} (n, m, W) + \gamma^2 \cdot Q_\mathcal{B} (\min(\ell(2 + 2\mu), n), \ell(1 + 2\mu), nW)
\end{equation*}

Let us now analyze the amortized update time. Since the total update time of $\mathcal{A}$ is $T_\mathcal{A} (n, m, W)$ and the amortized update time of $\mathcal{B}$ is $t_\mathcal{B} (\min(\ell(2 + 2\mu), n), \ell(1 + 2\mu), nW)$
during the phase, it remains to bound the total number of updates to $H$ per phase.
Whenever an edge $e = (u, v)$ is inserted to $G$, we add to $H$ the two endpoints $u$ and $v$ together with their hub sets $S(u)$ and $S(v)$, and at most $1 + 2\gamma$ updates can occur to $H$. Until $(u, v)$ is deleted from $H$, every update to $H$ between vertices 
from the set $\{u\} \cup \{v\} \cup S(u) \cup S(v)$ modifies an entry of the distance estimate $\delta(u, \cdot)$ or $\delta(v, \cdot)$. 
By the definition of $\chi^{(t)}(\cdot)$, the number of times that the distance estimates $\delta(u, \cdot)$ and $\delta(v, \cdot)$ are modified at time $t$ of $\mathcal{A}$, is equal to $\chi^{(t)}(u) + \chi^{(t)}(v)$. Hence until $(u, v)$ is deleted from $H$, the total number of updates to $H$ between vertices from the set $\{u\} \cup \{v\} \cup S(u) \cup S(v)$ is equal to $2 \cdot (\sum_{t} \chi^{(t)}(u) + \sum_{t} \chi^{(t)}(v)) = 2 \cdot (X(u) + X(v))$,\footnote{We multiply by $2$ because edge weight changes are simulated by a deletion followed by a re-insertion.} which is at most $4\zeta$ based on the second property of Lemma~\ref{lem:decr_to_fully_dynamic_dist_oracle}. Moreover, when the edge $e$ is deleted from $G$, at most $1 + 2\gamma$ updates can occur to $H$.
Therefore, the total number of updates to $H$ that correspond to an inserted edge in $G$, is at most $2 + 4\gamma + 4\zeta = 2 + 4\mu$ per phase. Since there can be at most $\ell$ inserted edges per phase, the total number of updates to $H$ during a phase is at most $\ell (2 + 4\mu)$. This implies that the total time for processing all updates is $T_\mathcal{A} (n, m, W) + t_\mathcal{B} (\min(\ell(2 + 2\mu), n), \ell(1 + 2\mu), 
nW) \cdot \ell (2 + 4\mu)$, which (when amortized over the $\ell$ updates of the previous phase) amounts to an amortized update time of:
\begin{equation*}
T_{\mathcal{C}} (n, m, W) = \frac{T_\mathcal{A} (n, m, W)}{\ell} + t_\mathcal{B} (\min(\ell(2 + 2\mu), n), \ell(1 + 2\mu), nW) \cdot (2 + 4\mu)
\end{equation*}
\end{proof}
\subsection{Decremental approximate hub-labeling scheme}\label{subsec:decremental}
In this section, we argue that an existing decremental distance oracle from \cite{LackiN22} also provides an approximate hub-labeling scheme whose properties make the the reduction of Lemma~\ref{lem:decr_to_fully_dynamic_dist_oracle} quite efficient. This decremental
algorithm is based on the well-known static Thorup-Zwick (TZ)  distance oracle \cite{TZ2005}.

  \paragraph{Thorup-Zwick distance oracle.} Given a graph $G = (V, E)$, the construction starts by defining a non-increasing sequence of sets $V=A_0 \supseteq A_1 \supseteq \cdots \supseteq A_k=\emptyset$, where for each $1 \leq i < k$, the set $A_i$ is obtained by subsampling each element of $A_{i-1}$ independently with probability $n^{-1/k}$.

For every vertex $v \in V$ and $1 \leq i < k$, let $\delta(v, A_i) = \min_{u \in A_i} \dist_G(v, u)$ be the minimum
distance from $v$ to a vertex in $A_i$.
As $A_k = \emptyset$, we let $\delta(v, A_k) = \infty$. Moreover, let $p_i(v) \in A_i$ be a vertex in $A_i$ closest to $v$, that is, $\dist_G(v, p_i(v)) = \delta(v, A_i)$. Then, the bunch $B(v) \subseteq V$ of each $v \in V$ is defined as:
\begin{align*}
    B(v) = \bigcup_{i=0}^{k-1} B_i(v) \;\text{,~~where~~} B_i(v) = \{u \in A_i \setminus A_{i+1}: \dist_G(v, u) < \dist_G(v, A_{i+1})\}
\end{align*}
The cluster of a vertex $u \in A_i \setminus A_{i+1}$ is defined as $C(u) = \{v \in V: \dist_G(v, u) < \dist_G(v, A_{i+1})\}$. Observe that $u \in B(v)$ if and only if $v \in C(u)$, for any $u, v \in V$. 

As noted in \cite{TZ2005}, this construction is a hub-labeling scheme of stretch $2k - 1$ (see Definition \ref{def:apprx_hub_lab_schm}), where the hub set $S(v)$ of a vertex $v \in V$ is $S(v) = B(v) \cup (\bigcup_{i=0}^{k-1} \{p_i(v)\} )$. In other words, bunches and pivots of all the $k$ levels form a hub set for $v$. For obtaining the distance estimates $\delta(v, \cdot)$ 
for all $v \in V$ as in Definition \ref{def:apprx_hub_lab_schm}, we need the associated distances $\delta(v, u) = \dist_G(v, u)$ for all $u \in S(v)$.
It can be shown that with a simple modification of the stretch argument (e.g.~see \cite{HKN14}), it is enough to only use the bunches as the hub sets, and explicit access to pivots is not necessary. Hence for simplifying the presentation in this section
we assume that the hub sets are equivalent with the bunches.
As shown in \cite{TZ2005}, the size of the bunch of any vertex is w.h.p.~bounded by $\tilde{O}(n^{1/k})$.
Recall that the maximum hub set size is one of the parameters governing the efficiency of our reduction.

In the following, we review the decremental algorithm of \cite{RZ12} which maintains TZ distance oracles for $d$-bounded distances, and the decremental algorithm of \cite{LackiN22} which has good properties
for the reduction of Lemma~\ref{lem:decr_to_fully_dynamic_dist_oracle}.

\paragraph{Decremental algorithm for approximate TZ distance oracle.}
We use the decremental algorithm by \cite{LackiN22} that satisfies the properties of Lemma \ref{lem:decr_to_fully_dynamic_dist_oracle},
for sufficiently small $\gamma$ and $\zeta$.
The properties that we need are implicit in their analysis. We rephrase their guarantees and give a high-level proof sketch for completeness, but we refer the reader to \cite{LackiN22} for further details.

The algorithm of \cite{LackiN22} utilizes a decremental version of the TZ distance oracles for $d$-bounded distances by \cite{RZ12} on a sequence of graphs. A crucial property of \cite{RZ12} algorithm can be summarized in the following lemma.

 \begin{lemma}[Implicit in \cite{RZ12}] \label{lem:RZ_bound}
For every vertex $v \in V$ and $0 \leq i < k$, there is a decremental algorithm that maintains the bunches and the estimates $\delta$ up to a distance bound $d$. Over the sequence of updates, the expected number of times $\delta(v, u)$ changes for all vertices $u \in A_i \setminus A_{i+1}$ such that $v \in C(u)$ and $\dist_G(v, u) \leq d$ is $\tilde{O}(dn^{1/k})$. Equivalently, w.h.p~the number of times $B(v)$ or a corresponding distance estimate $\delta(v, u)$ for $u \in B(v)$ changes over all updates is bounded by $\tilde{O}(dn^{1/k})$.

\end{lemma}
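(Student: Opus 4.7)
The plan is to derive the stated bound from the analysis of the decremental Thorup--Zwick data structure of~\cite{RZ12}. At each level $i$, the algorithm maintains, for every center $u \in A_i$, a monotone Even--Shiloach tree rooted at $u$ whose distance labels are non-decreasing across the decremental update sequence and whose depth is capped at $d$. Every reported change to $\delta(v, u)$ corresponds either to a strict unit-or-more increase of $v$'s label in the tree rooted at $u$, or to $v$ entering or leaving $C(u)$, so the total count decomposes into these two kinds of events.

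First I would bound the per-center contribution. Fix $v$ and $u \in A_i \setminus A_{i+1}$; since edge weights are positive integers and the label of $v$ in the monotone tree is capped at $d$, there are at most $d$ label increments. In the monotone variant, the condition $v \in C(u)$ is witnessed by a single contiguous time interval, contributing only $O(1)$ entry/exit events, so the per-center contribution is $O(d)$. Next I would bound the aggregate over centers via the classical Thorup--Zwick sampling argument: at any instant, if we enumerate $A_i$ by increasing distance from $v$, the current bunch $B_i(v)$ is exactly the prefix of this list up to (but not including) the first $A_{i+1}$-vertex; because $A_{i+1}$ is obtained by subsampling $A_i$ at rate $n^{-1/k}$ independently of the oblivious update sequence, this prefix has geometric length with expectation $n^{1/k}$ and tail $\tilde O(n^{1/k})$ with high probability. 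Combined with the no-re-entry property of the monotone tree, the total number of centers that ever contribute to $v$ at level $i$ is $\tilde O(n^{1/k})$ w.h.p., and multiplying by the per-center bound $O(d)$ yields the claimed $\tilde O(d\, n^{1/k})$.

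The main obstacle is justifying the no-re-entry: since both $\delta(v, u)$ and $\dist_G(v, A_{i+1})$ are non-decreasing under deletions, a naive reading would allow $u$ to oscillate in and out of $B_i(v)$ as these two quantities grow in tandem. The resolution, baked into the monotone ES-tree design of~\cite{RZ12}, is that once $v$ is ejected from $C(u)$ it is permanently removed from the tree; this can only overestimate distances, so the $(2k-1)$-stretch of the Thorup--Zwick oracle is preserved while each bunch retains size $\tilde O(n^{1/k})$ at every moment. The high-probability version of the lemma then follows from applying the standard geometric tail bound independently at each sampling level, since all the sampling randomness is independent of the adversary's update sequence.
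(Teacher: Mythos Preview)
The paper does not prove this lemma at all; it is stated with the attribution ``Implicit in~\cite{RZ12}'' and then invoked as a black box in the proof sketch of Lemma~\ref{lem:dec_alg}. So there is no in-paper argument to compare your proposal against, and your reconstruction is in fact more detailed than anything the paper offers. The ingredients you list---monotone Even--Shiloach trees per center with integer labels capped at~$d$ (giving $O(d)$ changes per vertex--center pair), the $\tilde O(n^{1/k})$ geometric bound on the momentary bunch size from Thorup--Zwick sampling, and the monotone/no-re-entry design---are indeed the components of the Roditty--Zwick analysis.

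One point deserves tightening. Your ``no re-entry'' clause, as you phrase it, only asserts that once $v$ is ejected from $C(u)$ it never returns to \emph{that} tree. This alone does not bound the number of \emph{distinct} centers whose tree ever hosts~$v$: since $\dist_G(v,A_{i+1})$ is also non-decreasing, a center $u'$ with $\dist_G(v,u')$ initially larger than $\dist_G(v,A_{i+1})$ can later satisfy the bunch condition and cause $v$ to enter $T_{u'}$ for the first time. The momentary $\tilde O(n^{1/k})$ bound plus no-re-entry does not automatically control this. The Roditty--Zwick analysis does close this gap (via a more careful charging over distance levels, so that the total number of $(u,j)$ pairs with $v$ at label $j$ in $T_u$ is $\tilde O(dn^{1/k})$), but your sketch as written does not quite get there. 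Since the paper treats the lemma as a citation anyway, this is a refinement of your write-up rather than a disagreement with the paper.
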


This lemma allows us to bound the number of changes in the bunches (as required by Lemma \ref{lem:decr_to_fully_dynamic_dist_oracle}) for pairs of vertices that are within bounded distances up to $d$. In terms of $X(v)$ as defined in Section~\ref{subsec:reduction} this implies $X(v) \leq \tilde{O}(dn^{1/k})$. 

 This algorithm is not efficient when we are not restricted to bounded distance. In order to eliminate this dependence on $d$, in \cite{LackiN22} they use decremental hopsets with a hopbound $\beta$ that informally speaking, allow them to do the following. Instead of working on the original graph, they maintain the decremental distance oracle of \cite{RZ12} on a sequence of scaled graphs up to depth $\beta$ in time $\tilde{O}(\beta mn^{1/k})$. A $(\beta, 1+\epsilon)$-hopset $H'$ for $G=(V,E)$ is a set of weighted edges such that for all $u,v \in V$ we have $\dist_G(u,v) \leq \dist^{(\beta)}_{G\cup H'}(u,v)  \leq (1+ \epsilon)\dist_G(u,v)$, where $\dist^{(\beta)}_{G\cup H'}(u,v) $ refers to a shortest path that uses at most $\beta$ hops. A decremental hopset has the additional property that the edge weights are non-decreasing. By maintaining a hopset with hopbound $\beta= \textit{polylog } (n)$ they can maintain TZ distance oracles with the following guarantees:

\begin{lemma}[Implicit in \cite{LackiN22}] \label{lem:dec_alg}
    Given a weighted undirected graph $G = (V,E)$ and $k > 1, 0 < \epsilon < 1$, there is a decremental hub-labeling scheme of stretch $(2k - 1)(1 + \epsilon)$ and w.h.p. the total update time is $\tilde{O}(mn^{1/k}) \cdot O(\log nW/\epsilon)^{2k+1}$. Moreover w.h.p. we have the following two properties:
    \begin{enumerate}
        \item $\forall v \in V$ and $\forall t: |S^{(t)}(v)| \leq \tilde{O}{(n^{1/k})}$. In other words, the size of the bunch of any vertex is bounded by $\tilde{O}{(n^{1/k})}$ at any moment of the algorithm.
        \item $\forall v \in V: X(v) \leq \tilde{O}{(n^{1/k})}\cdot O(\log nW/\epsilon)^{2k+1}$. In other words, for every vertex $v \in V$ the total number of changes to $\delta(v, \cdot)$ is at most $\tilde{O}{(n^{1/k})\cdot O(\log nW/\epsilon)^{2k+1}}$ 
        over the course of the algorithm. Moreover the algorithm detects and reports these changes explicitly.
    \end{enumerate}
\end{lemma}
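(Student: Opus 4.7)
The plan is to combine the decremental bounded-distance TZ oracle of Roditty--Zwick (Lemma~\ref{lem:RZ_bound}) with a decremental $(\beta,1+\epsilon)$-hopset of small hopbound $\beta = \operatorname{polylog}(n)\cdot(\log(nW)/\epsilon)^{O(1)}$, following the scheme of \cite{LackiN22}. The decremental hopset lets us replace the ``$d$'' appearing in Lemma~\ref{lem:RZ_bound} by the hopbound $\beta$, since any approximate shortest path can be realized in at most $\beta$ hops in $G\cup H'$; this removes the polynomial dependence on the distance range from the RZ bounds.

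Concretely, I would run $\Theta(\log(nW)/\epsilon)$ parallel copies of the construction, one per distance scale $2^i$ for $i=0,1,\dots,\lceil\log_{1+\epsilon}(nW)\rceil$. For scale $i$, maintain a decremental $(\beta,1+\epsilon)$-hopset $H'_i$ of the current graph $G$, and on top of $G\cup H'_i$ run the decremental RZ oracle of Lemma~\ref{lem:RZ_bound} with the bounded-distance parameter set to $d=O(\beta)$ for the $i$-th level of sampled pivots $A_0\supseteq A_1\supseteq\cdots\supseteq A_k$. The hub set $S(v)$ output by our algorithm is the union of the bunches $B^{(i)}(v)$ maintained at each scale, and the distance estimate $\delta(v,u)$ is the minimum of the estimates reported at any scale where $u\in B^{(i)}(v)$.

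For correctness, the standard TZ stretch analysis (as reworked in \cite{HKN14}) gives stretch $2k-1$ when distances are measured in $G\cup H'_i$; composing with the hopset's $(1+\epsilon)$ distortion and taking the tightest scale yields stretch $(2k-1)(1+\epsilon)$. For the two bookkeeping properties, at each scale the hopset-augmented graph has the same vertex set as $G$, so by \cite{TZ2005} each bunch $B^{(i)}(v)$ has size $\tilde{O}(n^{1/k})$ w.h.p.; taking the union over the $\tilde{O}(1)$ scales still fits in $\tilde{O}(n^{1/k})$ under our log-hiding convention, giving Property~1. For Property~2, Lemma~\ref{lem:RZ_bound} ensures that at a single scale the expected number of changes to $\delta(v,\cdot)$ is $\tilde{O}(\beta\, n^{1/k})$; summing over the $\tilde{O}(\log(nW)/\epsilon)$ scales and substituting $\beta=O(\log(nW)/\epsilon)^{O(k)}$ produces the claimed $\tilde{O}(n^{1/k})\cdot O(\log(nW)/\epsilon)^{2k+1}$ bound, and a concentration argument (Markov/union bound, amplified by standard repetition) turns the expectation into a w.h.p. guarantee. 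Each change is detected and reported explicitly because both the RZ oracle and the hopset are event-driven decremental structures.

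For the total update time, each scale contributes $\tilde{O}(mn^{1/k})$ for RZ (whose running time is dominated by the work charged to bunch changes, i.e. $\beta\, n^{1/k}$ per vertex times $m$ edges) plus the hopset maintenance cost, which also fits within the same bound up to factors of $O(\log(nW)/\epsilon)^{O(k)}$. Summing over all scales yields the stated $\tilde{O}(mn^{1/k})\cdot O(\log(nW)/\epsilon)^{2k+1}$.

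The main obstacle will be the careful interleaving of the hopset recomputation and the RZ bookkeeping: because the hopset edge weights are only guaranteed to be non-decreasing, one must argue that the monotonicity of distance estimates is preserved in $G\cup H'_i$ and that the RZ analysis from Lemma~\ref{lem:RZ_bound} still applies when the underlying graph is $G\cup H'_i$ rather than $G$ itself. This is where we appeal most heavily to the analysis in \cite{LackiN22}; the rest is accounting over scales.
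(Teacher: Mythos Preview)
Your high-level plan --- combine a decremental hopset with the bounded-distance Roditty--Zwick oracle, and take a union over logarithmically many distance scales --- matches the paper. But there is a genuine gap in how you glue the two pieces together. You write that at each scale $i$ you ``run the decremental RZ oracle of Lemma~\ref{lem:RZ_bound} on $G\cup H'_i$ with the bounded-distance parameter set to $d=O(\beta)$''. The parameter $d$ in Lemma~\ref{lem:RZ_bound} is a \emph{distance} (weight) bound, not a hop bound; a $(\beta,1+\epsilon)$-hopset only guarantees that approximate shortest paths use at most $\beta$ \emph{edges}, and their weighted length can still be as large as $nW$. So running RZ on the unscaled graph $G\cup H'_i$ with $d=O(\beta)$ would only certify pairs whose true distance is $O(\beta)=\operatorname{polylog}(n)$, and your $O(\log(nW))$ ``scales'' as described all compute the same thing and never reach long distances.

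The missing step is the Klein--Subramanian edge-weight scaling that the paper invokes: one maintains a \emph{single} hopset $H'$ on $G$, and for each $r$ forms a rounded graph $G_r$ from $G\cup H'$ in which edge weights are divided by roughly $\epsilon_0 2^r/\beta$ and rounded, so that any $\beta$-hop path in $G\cup H'$ of length in $[2^r,2^{r+1}]$ becomes a path of \emph{depth} $O(\beta/\epsilon_0)$ in $G_r$. Only after this rescaling does it make sense to plug $d=O(\beta/\epsilon_0)$ into Lemma~\ref{lem:RZ_bound}; this is exactly what converts the hopset's hop-bound into the distance-bound that RZ needs, and it is what drives both Property~2 (via $X(v)\le\tilde O(d\,n^{1/k})$ with $d=O(\beta/\epsilon_0)$ on each $G_r$) and the total update time. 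Once you insert this scaling step, the rest of your accounting (union of bunches, summing $\tilde O(\beta n^{1/k})$ over $O(\log nW)$ scales, concentration) goes through essentially as in the paper. A minor side remark: there is no need for a separate hopset $H'_i$ per scale; one hopset on $G$ suffices, and the scaling is applied to $G\cup H'$.
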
 

\begin{proof}[Proof sketch.] 
    In \cite{LackiN22} approximate TZ bunches and clusters (and hence hub sets) are maintained, roughly as follows: a decremental $(\beta, 1+\epsilon)$- hopset $H'$ with hopbound $\beta = O(\log nW/\epsilon)^{2k+1}$ and size $O(n^{1+1/k})$ can be maintained in $\tilde{O}(\beta mn^{1/k})$ total update time. Then $H'$ is used to maintain a $(2k-1)(1+\epsilon)$ distance oracle by running the Roditty-Zwick \cite{RZ12} algorithm on a sequence of $O(\log nW)$ scaled graphs $G_1,\ldots, G_{\log nW}$ up to depth $\beta$.  Roughly speaking, this scaling approach (originally proposed by \cite{KS97}) for a fixed error parameter $\epsilon_0$ and hopbound parameter $\beta$ maintains a graph $G_r$ 
    for each distance interval $[2^r,2^{r+1}]$ in which the edge weights are rounded such that $\beta$-\textit{hop-bounded} distances of length $\in [2^r,2^{r+1}]$ in the original graph $G$ can be $(1+\epsilon_0)$-approximated by $O(\beta/\epsilon_0)$-\textit{bounded depth} distances on the scaled graph $G_r$. Hence by first adding the hopset edges $H'$ and then applying the scaling to $G \cup H'$, it is enough to consider $\beta$ bounded distances on each scaled graph. 
    The final bunch of each vertex is the union of bunches over all the graphs $G_i$, for $1 \leq i \leq \log nW$. Since the size of the bunches on each scaled graph at any time is also $\tilde{O}(n^{1/k})$, the first property holds. 
    
    Also from Lemma \ref{lem:RZ_bound} and by setting $d= O(\beta/\epsilon_0) = \text{polylog} (n)$ and we have:
    \begin{itemize}
        \item  The total number of times the bunch $B(v)$ changes for each vertex $v$ on \textit{each scaled graph} is~w.h.p~$\tilde{O}(\beta n^{1/k})$. Hence the second property holds since we have $\beta = O(\log (nW)/\epsilon)^{2k+1}$, and in total $v$ changes its bunch on union of $O(\log nW)$ scaled graphs at most $\tilde{O}(\beta n^{1/k})$ times. 
 
    \item The update time is $\tilde{O}(mn^{1/k}\beta)$.
    Analyzing correctness of their hierarchical decremental hopsets requires handling some technicalities that we do not get into here as we can use their stretch analysis as a black-box. 
    \end{itemize}
      
    Overall the distance oracle based on the \textit{approximate} bunches maintained lead to $(2k-1) (1+\epsilon)$-approximate distances. Similar to the static case, the stretch guarantee of \cite{TZ2005} carries over to the hub-labeling scheme that are based on approximate bunches as the hub set. In this case in addition to the $(2k-1)$ factor, there will be an additional $(1+\epsilon)$ factor since the scaling and use of hopsets effectively give us approximate bunches.
 \end{proof}
 
\subsection{Fully dynamic distance oracle}\label{subse:final_result}

In this section we explain how we obtain our final fully dynamic distance oracle by using the decremental algorithm of Section \ref{subsec:decremental} in our reduction of Lemma \ref{lem:decr_to_fully_dynamic_dist_oracle}.

\begin{theorem} \label{thm:general_tradeoffs}
For any integer parameters $i \geq 0, k>1$, there is a fully dynamic distance oracle $ \mathcal{B}_i $ with stretch $(4k)^i$ and w.h.p.~the amortized update time is $ t_{\mathcal{B}_i} (n, m, W) = \tilde{O}(1)^{ki} \cdot m^{3/(3i+1)} \cdot n^{4i / k}$ and the query time $Q_{\mathcal{B}_i} (n, m, W) = \tilde{O}(1)^i \cdot n^{2i / k}$.
\end{theorem}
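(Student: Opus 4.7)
The plan is to prove the theorem by induction on $i$, bootstrapping $\mathcal{B}_i$ from $\mathcal{B}_{i-1}$ via one application of the reduction in Lemma~\ref{lem:decr_to_fully_dynamic_dist_oracle}, with the decremental hub-labeling scheme of Lemma~\ref{lem:dec_alg} as the input $\mathcal{A}$.

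For the base case $i = 0$ the claim asks for an exact distance oracle with update time $\tilde O(m^3)$ and query time $\tilde O(1)$. This is satisfied by the trivial oracle that stores the full $n \times n$ distance matrix and recomputes it from scratch after each update; assuming $m \ge n - 1$ without loss of generality (pad with dummy heavy edges if needed), $n$ runs of Dijkstra give total time $\tilde O(nm) \le \tilde O(m^3)$, and queries reduce to table lookups.

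For the inductive step $i \ge 1$, suppose $\mathcal{B}_{i-1}$ already achieves the claimed guarantees. I would instantiate Lemma~\ref{lem:decr_to_fully_dynamic_dist_oracle} with $\mathcal{B} = \mathcal{B}_{i-1}$ (stretch $\beta = (4k)^{i-1}$) and $\mathcal{A}$ the scheme of Lemma~\ref{lem:dec_alg} on the same parameter $k$ and a sufficiently small constant $\epsilon$. This yields $\alpha = (2k-1)(1+\epsilon) \le 4k$, hub-set size $\gamma = \tilde O(n^{1/k})$, change budget $\zeta = \tilde O(n^{1/k})$, $T_{\mathcal A}(n,m,W) = \tilde O(m n^{1/k})$, and $Q_{\mathcal A}(n,m,W) = \tilde O(n^{1/k})$; the polylog-in-$nW$ factor is of order $\tilde O(1)^{O(k)}$ per level and compounds across the $i$ levels into the advertised $\tilde O(1)^{ki}$ in the update time and $\tilde O(1)^i$ in the query time. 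In particular $\mu = \gamma + \zeta = \tilde O(n^{1/k})$, and the stretch of $\mathcal{B}_i$ is $\alpha\beta \le (4k)^i$, as required.

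The quantitative content is the choice of phase length $\ell$. The crucial observation is that the auxiliary graph $H$ always has $n_H \le n$ vertices (since $V(H) \subseteq V$) and $m_H = O(\ell n^{1/k})$ edges, so by the induction hypothesis $t_{\mathcal{B}_{i-1}}(n_H, m_H, nW) \le \tilde O(1)^{k(i-1)} (\ell n^{1/k})^{a_{i-1}} n^{b_{i-1}}$, where $a_{i-1} = 3/(3i-2)$ and $b_{i-1} = 4(i-1)/k$. Balancing $T_{\mathcal A}/\ell = \tilde O(m n^{1/k}/\ell)$ against $\mu \cdot t_{\mathcal{B}_{i-1}}$ under this bound leads to the choice $\ell = m^{1/(a_{i-1}+1)} \!/\, n^{(a_{i-1}/k + b_{i-1})/(a_{i-1}+1)}$. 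Plugging back, the $m$-exponent of the resulting update time equals exactly $a_{i-1}/(a_{i-1}+1) = 3/(3i+1)$, and the $n$-exponent simplifies after a short calculation to $(12 i^2 - 17 i + 12)/(k(3i+1))$, which is at most $4i/k$ by the elementary inequality $12 \le 21 i$ for $i \ge 1$. The main technical step is ensuring this algebra collapses to the claimed exponents; it works because the induction hypothesis uses precisely the values $a_{i-1}, b_{i-1}$ that make $a_{i-1}/(a_{i-1}+1)$ telescope to $3/(3i+1)$. For the query time, the same bound $n_H \le n$ together with $\gamma^2 = \tilde O(n^{2/k})$ gives $Q_{\mathcal A} + \gamma^2 \cdot Q_{\mathcal{B}_{i-1}}(n_H, m_H, nW) \le \tilde O(n^{1/k}) + \tilde O(n^{2/k}) \cdot \tilde O(1)^{i-1} n^{2(i-1)/k} = \tilde O(1)^i \, n^{2i/k}$, completing the induction.
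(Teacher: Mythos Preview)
Your proposal is correct and follows essentially the same inductive bootstrapping as the paper: same base case (trivial recompute-from-scratch oracle), same inductive step via Lemma~\ref{lem:decr_to_fully_dynamic_dist_oracle} with $\mathcal{A}$ the decremental scheme of Lemma~\ref{lem:dec_alg}, same use of $n_H \le n$ for the vertex count of $H$, and the same query-time recursion. The only difference is cosmetic: the paper picks the simpler phase length $\ell = m^{(3i-2)/(3i+1)}$ (depending on $m$ alone) and absorbs the slack by the crude overestimate $n^{3/((3i-2)k)} \le n^{3/k}$, whereas you choose an $n$-dependent $\ell$ that exactly balances the two terms and then relax the resulting exponent $(12i^2-17i+12)/(k(3i+1)) \le 4i/k$ at the end; both routes land on the same bound.
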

\begin{proof}
The proof is by induction on $i$. For the base case $i = 0$, let $\mathcal{B}_0$ be the trivial fully dynamic distance oracle 
that achieves stretch~$1$, amortized update time $ t_{\mathcal{B}_0} (n, m, W) = O (n^3) $, and query time $Q_{\mathcal{B}_0} (n, m, W) = O(1) $, by recomputing all-pairs shortest paths from scratch after each update (e.g., with the Floyd–Warshall algorithm).

For the induction step, let $\mathcal{A}$ denote the decremental approximate hub-labeling scheme from Lemma~\ref{lem:dec_alg} with stretch $\alpha = 4k$ and w.h.p. total update time $T_{\mathcal{A}} (n, m, W) = \tilde{O}(1)^k \cdot mn^{1/k}$ and query time $Q_{\mathcal{A}}(n, m, W) = 
\tilde{O}(1) \cdot n^{1/k}$, where $\epsilon$ has been replaced with any value strictly smaller than $\frac{1}{2}$.
By inductive hypothesis, we have that $ \mathcal{B}_i $ (with $ i \geq 0 $) is a fully dynamic distance oracle of stretch $\beta_i = (4k)^i$ with amortized update time 
$\tilde{O}(1)^{ki} \cdot m^{3/(3i+1)} \cdot n^{4i / k}$ and query time $\tilde{O}(1)^i \cdot n^{2i / k}$.
Based on Lemma~\ref{lem:dec_alg}, the decremental approximate hub-labeling scheme $\mathcal{A}$ satisfies the properties of Lemma~\ref{lem:decr_to_fully_dynamic_dist_oracle} with $\gamma = \tilde{O}(1) \cdot n^{1/k}$ and $\zeta = \tilde{O}(1)^k \cdot n^{1/k}$. By applying then Lemma~\ref{lem:decr_to_fully_dynamic_dist_oracle} to $\mathcal{A}$ and 
$\mathcal{B}_i $ with $\ell = m^{(3i + 1) / (3i + 4)}$, the resulting fully dynamic distance oracle $ \mathcal{B}_{i+1} $ has stretch $(4k)^{i+1}$, and amortized update time:
\begin{equation*}
t_{\mathcal{B}_{i+1}}(n, m, W) = \frac{T_\mathcal{A} (n, m, W)}{\ell} + t_{\mathcal{B}_i} (n, \ell(1 + 2\mu), nW) \cdot (2 + 4\mu)
\end{equation*}
The first term is equal to:
\begin{align*}
\frac{T_\mathcal{A} (n, m, W)}{\ell} &= \frac{\tilde{O}(1)^k \cdot mn^{1/k}}{\ell} \\
 & {\scriptsize \text{(Replace $\ell$ with $m^{(3i + 1) / (3i + 4)}$)}} \\
 &=\tilde{O}(1)^k \cdot m^{3/(3i+4)} \cdot n^{1/k} \\
 &=\tilde{O}(1)^k \cdot m^{3/(3(i+1)+1)} \cdot n^{1/k}
\end{align*}
and the second term is equal to (where $\mu = \tilde{O}(1)^k \cdot n^{1/k}$): 
\begin{align*}
t_{\mathcal{B}_i} (n, \ell(1 + 2\mu), nW) \cdot (2 + 4\mu) &= \tilde{O}(1)^{ki} \cdot (\ell \cdot \tilde{O}(1)^k \cdot n^{1/k})^{3/(3i+1)} \cdot n^{4i / k} \cdot \tilde{O}(1)^k \cdot n^{1/k} \\
 & {\scriptsize \text{(Replace $\ell$ with $m^{(3i + 1) / (3i + 4)}$)}} \\
 &=\tilde{O}(1)^{ki} \cdot (m^{(3i + 1) / (3i + 4)} \cdot \tilde{O}(1)^k \cdot n^{1/k})^{3/(3i+1)} \cdot n^{4i / k} \cdot \tilde{O}(1)^k \cdot n^{1/k}\\
  & {\scriptsize \text{(Replace $n^{3/(3i+1)k}$ with $n^{3/k}$ and $\tilde{O}(1)^{3k/(3i+1)}$ with $\tilde{O}(1)^{3k}$)}} \\
  &=\tilde{O}(1)^{ki} \cdot m^{3 / (3i + 4)} \cdot \tilde{O}(1)^{3k} \cdot n^{3/k} \cdot n^{4i / k} \cdot \tilde{O}(1)^k \cdot n^{1/k} \\
 &=\tilde{O}(1)^{ki+k} \cdot m^{3 / (3i + 4)} \cdot n^{(4i + 4) / k} \\
 &=\tilde{O}(1)^{k(i+1)} \cdot m^{3 / (3(i+1) + 1)} \cdot n^{4(i+1) / k}
\end{align*}
Therefore the amortized update time of $\mathcal{B}_{i+1}$ is:
\begin{equation*}
t_{\mathcal{B}_{i+1}}(n, m, W) = \tilde{O}(1)^{k(i+1)} \cdot m^{3 / (3(i+1) + 1)} \cdot n^{4(i+1) / k}
\end{equation*}
Finally the query time of $\mathcal{B}_{i+1}$ is (where $\gamma^2 = \tilde{O}(1)^2 \cdot n^{2/k}$): 
\begin{align*}
    Q_{\mathcal{B}_{i+1}} (n, m, W) &= Q_\mathcal{A} (n, m, W) + \gamma^2 \cdot Q_{\mathcal{B}_i} (n, \ell(1 + 2\mu), nW) \\ 
    &= \tilde{O}(1) \cdot n^{1/k} + \tilde{O}(1)^2 \cdot n^{2/k} \cdot \tilde{O}(1)^i \cdot n^{2i/k} \\
    &= \tilde{O}(1)^{i+1} \cdot n^{2(i+1)/k}
\end{align*}
and so the distance oracle $\mathcal{B}_{i+1}$ has the desired guarantees.
\end{proof}

\maincor*
\begin{proof}
By Theorem~\ref{thm:general_tradeoffs}, for any $i \geq 1, k > 1$, there is a fully dynamic distance oracle $\mathcal{B}_i$ of
stretch $(4k)^i$ that w.h.p. achieves
$\tilde{O}(1)^{ki} \cdot m^{1/i} \cdot n^{4i / k}$
amortized update time and $\tilde{O}(1)^i \cdot n^{2i/k}$
query time. By setting then $i = \frac{4}{\rho}$ and $k = \frac{64}{\rho^2}$, the claim follows.
\end{proof}

In Theorem~\ref{thm:general_tradeoffs}, we can set $i$ to be a constant and set $k= O(\log \log n)^{1/i}$ to obtain another tradeoff, which is summarized in the following corollary.

\begin{corollary}\label{cor:loglog_stretch}
Given a weighted undirected graph $G=(V,E)$, there is a fully dynamic distance oracle with stretch $O(\log \log n)$ that w.h.p. achieves $n^{o(1)}$ query time and
$\tilde{O}(n^{\rho})$ amortized update time, for an arbitrarily small constant $\rho$.
\end{corollary}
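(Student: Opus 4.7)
The plan is to apply Theorem~\ref{thm:general_tradeoffs} with a carefully chosen pair $(i, k)$ tailored to the target stretch $O(\log\log n)$ and update time $\tilde O(n^\rho)$. Since the corollary allows $\rho$ to be an arbitrarily small constant, I would let $i$ be a constant depending only on $\rho$, and let $k$ grow slowly with $n$ in a way that keeps the stretch $(4k)^i$ at $O(\log\log n)$.

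Concretely, I would first fix $i$ large enough that the dominant polynomial-in-$n$ factor in the update time is at most $n^\rho$. Recalling from Theorem~\ref{thm:general_tradeoffs} that the amortized update time is $\tilde O(1)^{ki} \cdot m^{3/(3i+1)} \cdot n^{4i/k}$, and using $m \le n^2$, the $m$-term contributes at most $n^{6/(3i+1)}$; so I would pick the constant $i = \lceil 2/\rho\rceil$ (or any constant with $6/(3i+1) \le \rho$), which handles this factor. Next I would set $k := \lceil (c\log\log n)^{1/i}\rceil$ for a constant $c$ chosen so that the stretch bound becomes the desired $O(\log\log n)$: indeed $(4k)^i = 4^i \cdot k^i = 4^i \cdot c\log\log n$, which is $O(\log\log n)$ since $i$ and $c$ are constants.

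With these choices I would then verify the two sub-polynomial claims. For the query time, $\tilde O(1)^i \cdot n^{2i/k}$, the first factor is polylogarithmic (hence $n^{o(1)}$) because $i$ is constant, and $2i/k = 2i/(c\log\log n)^{1/i} \to 0$, so $n^{2i/k} = n^{o(1)}$, giving the required $n^{o(1)}$ query time. For the update time, the factor $\tilde O(1)^{ki}$ equals $(\log nW)^{O(ki)}$ with $ki = i(c\log\log n)^{1/i}$; since any fixed polynomial of $\log\log n$ is $o(\log n)$, we have $ki \cdot \log\log n = o(\log n)$, hence $(\log n)^{O(ki)} = n^{o(1)}$. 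Similarly $n^{4i/k} = n^{o(1)}$. Combining with the $m^{3/(3i+1)} \le n^\rho$ bound gives total amortized update time $n^{o(1)} \cdot n^\rho = \tilde O(n^\rho)$, as required.

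No real obstacle is anticipated: the proof reduces to parameter selection and straightforward asymptotic verification, since Theorem~\ref{thm:general_tradeoffs} already supplies all the heavy machinery. The only delicate point is ensuring that the $\tilde O(1)^{ki}$ factor stays $n^{o(1)}$ under the slowly growing choice of $k$, which follows because $k$ is a fractional power of $\log\log n$ and $i$ is a constant, so the exponent $ki \log\log n$ remains $o(\log n)$.
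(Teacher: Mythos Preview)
Your proposal is correct and follows exactly the approach the paper indicates: set $i$ to a constant depending on $\rho$ and $k = \Theta((\log\log n)^{1/i})$, then verify that $(4k)^i = O(\log\log n)$ and that each factor in the update and query time bounds of Theorem~\ref{thm:general_tradeoffs} stays within the claimed budgets. The only small looseness is the final equality $n^{o(1)} \cdot n^\rho = \tilde O(n^\rho)$, which is not literally true, but this is harmless here since $\rho$ is an arbitrarily small constant and can absorb the $n^{o(1)}$ factor (the paper glosses over the same point).
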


Finally note that we can also obtain similar tradeoffs as \cite{ForsterGH21} where all three of stretch, amortized update time and query time are $n^{o(1)}$, by setting $k= O(\log \log n)^2$ and $i= O(\log \log n)$ in Theorem \ref{thm:general_tradeoffs}.



\printbibliography 

\end{document}